\begin{document}

\title{Fast Cartesian Tree Matching}
%
%
\author{Siwoo Song\inst{1} \and Cheol Ryu\inst{1} \and Simone Faro\inst{2} \and Thierry Lecroq\inst{3} \and Kunsoo Park\inst{1}\textsuperscript{(\Letter)}}
\authorrunning{S. Song et al.}
%
\institute{Seoul National University, Seoul, Korea\\ \email{\{swsong,cryu,kpark\}@theory.snu.ac.kr} \and
University of Catania, Catania, Italy\\ \email{faro@dmi.unict.it} \and Normandie University, Rouen, France\\ \email{thierry.lecroq@univ-rouen.fr}}

%
\maketitle              
\begin{abstract}
Cartesian tree matching is the problem of finding all substrings of a given text which have the same Cartesian trees as that of a given pattern. So far there is one linear-time solution for Cartesian tree matching, which is based on the KMP algorithm. We improve the running time of the previous solution by introducing new representations. We present the framework of a binary filtration method and an efficient verification technique for Cartesian tree matching. Any exact string matching algorithm can be used as a filtration for Cartesian tree matching on our framework. We also present a SIMD solution for Cartesian tree matching suitable for short patterns.
By experiments we show that known string matching algorithms combined on our framework of binary filtration and efficient verification produce algorithms of good performances for Cartesian tree matching. 
\keywords{Cartesian tree matching \and Global-parent representation  \and Filtration algorithms.}
\end{abstract}
\section{Introduction}
String matching is one of fundamental problems in computer science. There are generalized matchings such as parameterized matching \cite{PARA1,PARA2}, swapped matching \cite{SWAP1,SWAP2}, overlap matching \cite{Overlap}, jumbled matching \cite{JUMBLE}, and so on. These problems are characterized by the way of defining a match, which depends on the application domains of the problems. In particular, order-preserving matching \cite{OPM,OPM2,OPM3} and Cartesian tree matching \cite{CTM} deal with the order relations between numbers.

The Cartesian tree \cite{CT} is a tree data structure that represents a string, focusing on the orders between elements of the string. Park et al.~\cite{CTM} introduced a metric of match called Cartesian tree matching. It is the problem of finding all substrings of a text $T$ which have the same Cartesian trees as that of a pattern $P$.
Cartesian tree matching can be applied to finding patterns in time series data such as share prices in stock markets, like order-preserving matching, but sometimes it may be more appropriate as indicated in \cite{CTM}.
Fig.~\ref{fig:CTmatching} shows an example of Cartesian tree matching. Suppose $T=(10,12,16,15,6,14,9,12,11,14,9,17,12,10,12)$ and $P=(3,1,6,4,8,6,7,5,9)$. The Cartesian tree of substring $u=(15,6,14,9,12,11,14,9,17)$ is the same as that of $P$. Note that if we use order-preserving matching instead of Cartesian tree matching as a metric, $u$ does not match $P$. 
\begin{figure}[t]
\centering
\includegraphics[height=4.0cm]{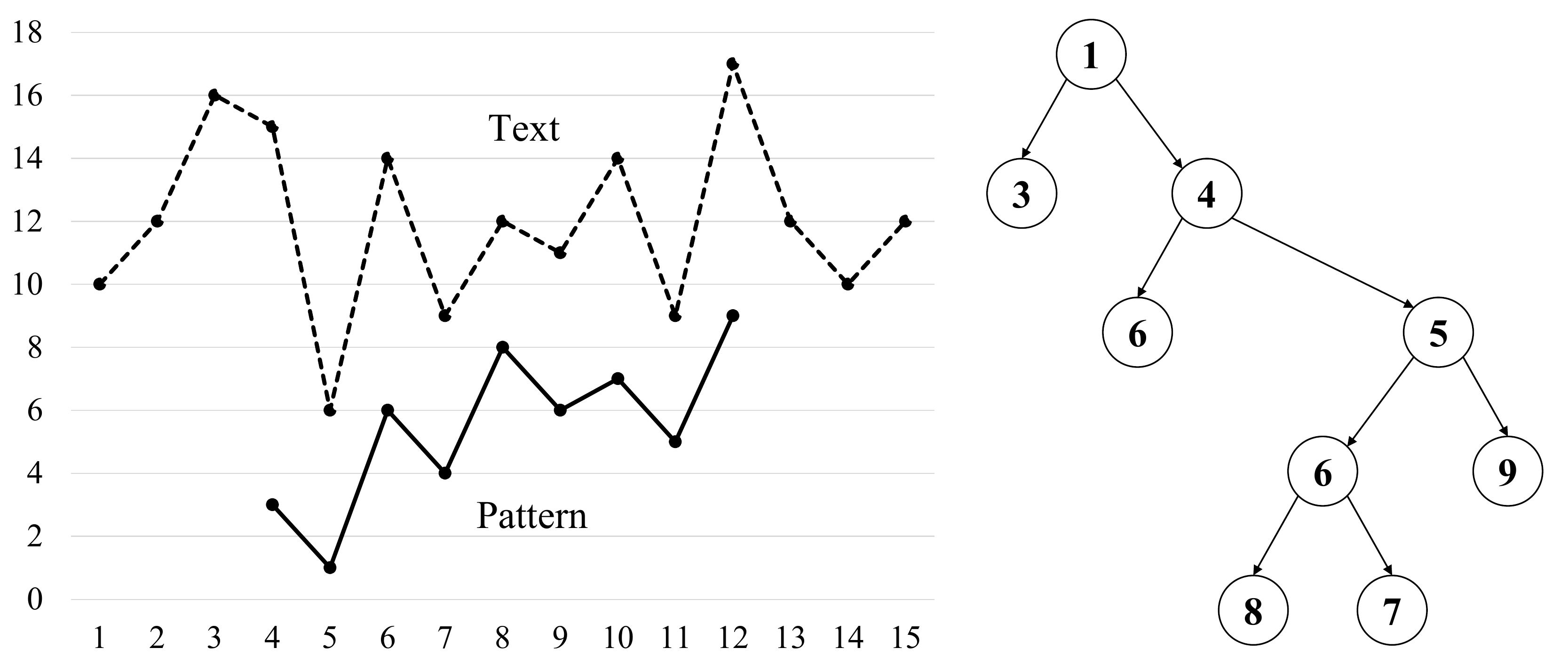}
\setlength{\belowcaptionskip}{-10pt} 
\caption{Cartesian tree matching, and Cartesian tree corresponding to pattern.}
\label{fig:CTmatching}
\end{figure}

String matching algorithms have been designed over the years. To speed up the search phase of string matching, algorithms based on automata and bit-parallelism were developed \cite{AOSO,SBNDM}. In recent years, the SIMD instruction set architecture gave rise to packed string matching, where one can compare packed data elements in parallel. 
In the last few years, many solutions for order-preserving matching have been proposed. Given a text of length $n$ and a pattern of length $m$, Kubica et al.~\cite{OPM3} and Kim et al.~\cite{OPM} gave $O(n + m \log m)$ time solutions based on the KMP algorithm. Cho et al.~\cite{CHO} presented an algorithm using the Boyer–Moore approach. Chhabra and Tarhio \cite{FilterOPM} presented a new practical solution based on filtration, and Chhabra et al.~\cite{SIMDOPM} gave a filtration algorithm using the Boyer-Moore-Horspool approach and SIMD instructions. Cantone et al.~\cite{OrderOPM} proposed filtration methods using the $q$-neighborhood representation and SIMD instructions. These filtration methods \cite{FilterOPM,SIMDOPM,OrderOPM} take sublinear time on average.

In this paper we introduce new representations, \emph{prefix-parent representation} and \emph{prefix-child representation}, which can be used to decide whether two strings have the same Cartesian trees or not. Using these representations, we improve the running time of the previous Cartesian tree matching algorithm in \cite{CTM}. We also present a binary filtration method for Cartesian tree matching, and give an efficient verification technique for Cartesian tree matching based on the \emph{global-parent representation}.
On the framework of our binary filtration method and efficient verification technique, we can apply any known string matching algorithm \cite{SkipSearch,HORSPOOL,SBNDM} as a filtration for Cartesian tree matching.
In addition, we present a SIMD solution for Cartesian tree matching based on the global-parent representation, which is suitable for short patterns.
We conduct experiments comparing many algorithms for Cartesian tree matching, which show that known string matching algorithms combined on the framework of our binary filtration and efficient verification for Cartesian tree matching produce algorithms of good performances for Cartesian tree matching.

This paper is organized as follows. In Section 2, we describe notations and the problem definition. In Section 3, we present an improved linear-time algorithm using new representations. In Section 4, we present the framework of binary filtration and efficient verification. In Section 5, we present a SIMD solution for short patterns. In Section 6, we give the experimental results of the previous algorithm and the proposed algorithms.

\section{Preliminaries}
\subsection{Basic notations}

A string is defined as a finite sequence of elements in an alphabet $\Sigma$. In this paper, we will assume that $\Sigma$ has a total order $<$. For a string $S$, $S[i]$ represents the $i$th element of $S$, and $S[i..j]$ represents a substring of $S$ from the $i$th element to the $j$th element. If $i>j$ then $S[i..j]$ is an empty string.

We will say $S[i]\prec S[j]$, if and only if $S[i]<S[j]$, or $S[i]$ and $S[j]$ have the same value with $i<j$.
Note that $S[i]=S[j]$ (as elements of the string) if and only if $i=j$. Unless stated otherwise, the minimum is defined by $\prec$.

\subsection{Cartesian tree matching}
A string $S$ can be associated with its corresponding Cartesian tree $CT(S)$ \cite{CT} according to the following rules:
\begin{itemize}
    \item If $S$ is an empty string, then $CT(S)$ is an empty tree.
    \item If $S[1..n]$ is not empty and $S[i]$ is the minimum value among $S$, then $CT(S)$ is the tree with $S[i]$ as the root, $CT(S[1..i-1])$ as the left subtree, and $CT(S[i+1..n])$ as the right subtree.
\end{itemize}
\textit{Cartesian tree matching} is to find all substrings of the text which have the same Cartesian trees as that of the pattern. Formally, Park et al.~\cite{CTM} define it as follows: 
\begin{definition}
\emph{(Cartesian tree matching) Given two strings text $T[1..n]$ and pattern $P[1..m]$, find every $1 \le i \le n-m+1$ such that $CT(T[i..i+m-1]) = CT(P[1..m])$.}
\end{definition}
Instead of building the Cartesian tree for every position in the text to solve Cartesian tree matching, Park et al.~\cite{CTM} use the following representation for a Cartesian tree.
\begin{definition}
\emph{(Parent-distance representation) Given a string $S[1..n]$, the parent-distance representation of $S$ is a function $\mathcal{PD}_S$, which is defined as follows:}
\begin{equation*}
    \mathcal{PD}_S(i) = \begin{cases} i - \max_{1 \leq j < i}\{{j : S[j] \prec S[i]} \} &\text{\emph{if such $j$ exists}}\\
    0 &\text{\emph{otherwise.}} \end{cases}
\end{equation*}
\end {definition}
Since the parent-distance representation has a one-to-one mapping to the Cartesian tree \cite{CTM}, it can replace the Cartesian tree without any loss of information.

\section {Fast linear Cartesian tree matching} \label{sec:linearalg}

The previous algorithm for Cartesian tree matching due to Park et al.~\cite{CTM} is based on the KMP algorithm \cite{KMP}. They changed the pattern and the text to parent-distance representations and found matches using the KMP algorithm. To compute the parent-distance representations of substrings of the text using $O(m)$ space, however, they used a deque data structure. We improve the text search phase of the previous algorithm by removing the overhead of computing parent-distance representations including deque operations.

In the text search phase of the previous algorithm, the parent-distance of each element $T[i]$ in $T[i-q..i]$ is computed to check whether it matches $\mathcal{PD}_P(q+1)$ when we know that $\mathcal{PD}_{P[1..q]}$ matches $\mathcal{PD}_{T[i-q..i-1]}$. We can do it directly without computing the parent-distances of text elements by using following representations: \emph{prefix-parent representation} and \emph{prefix-child representation}.

\begin{figure}[t]
\centering
\includegraphics[height=3.6cm]{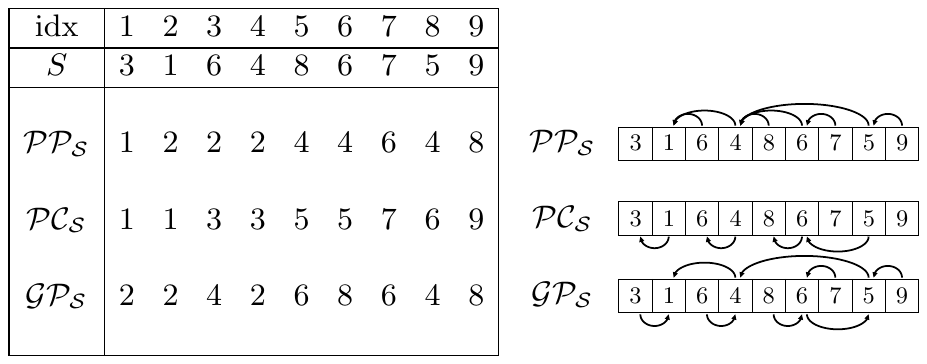}
\setlength{\belowcaptionskip}{-12pt} 
\caption{$\mathcal{PP}_S, \mathcal{PC}_S, \mathcal{GP}_S$ for $S=(3,1,6,4,8,6,7,5,9)$.}
\label{fig:pppc}
\end{figure}

\begin{definition}
\emph{(prefix-parent representation) Given a string $S[1..n]$, the prefix-parent representation of $S$ is a function $\mathcal{PP}_S$, which is defined as follows:}

\begin{equation*}
    \mathcal{PP}_S(i) = \begin{cases} \max_{1 \leq j < i}\{{j : S[j] \prec S[i]} \} &\text{\emph{if such $j$ exists}}\\
    i &\text{\emph{otherwise.}} \end{cases}
\end{equation*}
\end{definition}
Since $\mathcal{PP}_S(i) = i-\mathcal{PD}_S(i)$, the prefix-parent representation also has a one-to-one mapping to the Cartesian tree. 

\begin{definition}
\emph{(prefix-child representation) Given a string $S[1..n]$, the prefix-child representation of $S$ is a function $\mathcal{PC}_S$, which is defined as follows: $\mathcal{PC}_S(1)=1$, and for $i\geq 2$,}
\begin{equation*}
    \mathcal{PC}_S(i) = \begin{cases}
    j \text{\emph{ such that }} S[j] \text{\emph{ is minimum for }}1 \leq j < i  &\text{\emph{if }}\mathcal{PP}_S(i)=i\\
    i &\text{\emph{if }} \mathcal{PP}_S(i)=i-1\\
    j \text{\emph{ such that }} S[j] \text{\emph{ is minimum for }}\mathcal{PP}_S(i) < j < i  &\text{\emph{if }}\mathcal{PP}_S(i)<i-1\text{\emph{.}}
    \end{cases}
\end{equation*}
\end{definition}

In other words, $S[\mathcal{PC}_S(i)]$ is a child of $S[i]$, because $S[\mathcal{PC}_S(i)]$ is the root of $CT(S[\mathcal{PP}_S(i)+1..i-1])$ when $\mathcal{PP}_S(i)< i-1$, and $S[\mathcal{PC}_S(i)]$ is the root of $CT(S[1..i-1])$ when $\mathcal{PP}_S(i)=i$. When $\mathcal{PP}_S(i)=i-1$, there is no child of $S[i]$ in $CT(S[1..i])$, and thus we set $\mathcal{PC}_S(i)$ as $i$.

Fig.~\ref{fig:pppc} shows the prefix-parent representation (resp. the prefix-child representation) of string $S = (3, 1, 6, 4, 8, 6, 7, 5, 9)$ by arrows. The arrow starting from $S[i]$ indicates $\mathcal{PP}_S(i)$ (resp. $\mathcal{PC}_S(i)$). If $\mathcal{PP}_S(i)=i$ (resp. $\mathcal{PC}_S(i)=i$), we omit the arrow.


The advantage of using the prefix-child representation and the prefix-parent representation is that we can check whether each text element matches the corresponding pattern element in constant time without computing its parent-distance \cite{CTM}.

\begin{algorithm}[t]
\caption{Text search of Cartesian tree matching}\label{alg:textsearch}
\begin{algorithmic}[1]
\Procedure{CARTESIAN-TREE-MATCH}{$T[1..n], P[1..m]$}
    \State $(\mathcal{PP}_P,\mathcal{PC}_P)$ $\gets$ PREFIX-PARENT-CHILD-REP($P$)
    \State $\pi$ $\gets$ FAILURE-FUNC($P$)
    \State $q \gets 0$
    
    \For{$i \gets 1$ \textbf{to} $n$}
        \While{$q \neq 0$}
            \If{$T[i-q-1+\mathcal{PP}_P(q+1)]\preceq T[i]\preceq T[i-q-1+\mathcal{PC}_P(q+1)]$}
                \State \textbf{break}
            \Else
                \State $q \gets \pi[q]$
            \EndIf
        \EndWhile
        
        \State $q \gets q + 1$
        
        \If{$q = m$}
            \State \textbf{print} ``Match occurred at $i-m+1$"
            \State $q \gets \pi[q]$
        \EndIf
    \EndFor
\EndProcedure
\end{algorithmic}
\end{algorithm}

\begin {theorem}
\label{thm:pppc}
\emph{Given two strings $P$ and $S$, assume that $P[1..q]$ and $S[1..q]$ have the same prefix-parent representations.
If $S[\mathcal{PP}_P(q+1)] \preceq S[q+1] \preceq S[\mathcal{PC}_P(q+1)]$, then $P[1..q+1]$ and $S[1..q+1]$ have the same prefix-parent representations, and vice versa.}
\end {theorem}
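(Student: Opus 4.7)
The plan is to interpret both inequalities through the mechanics of Cartesian-tree insertion. By the one-to-one correspondence between $\mathcal{PP}$ and the Cartesian tree noted after the definition of prefix-parent, the hypothesis that $P[1..q]$ and $S[1..q]$ share the same prefix-parent representation means they have a common Cartesian tree $\mathcal{T}$; in particular they share a common rightmost path of positions $r_1 < r_2 < \ldots < r_k = q$, along which the values of both strings are strictly $\prec$-increasing (the min-heap property on the path). Appending a new element corresponds to walking up this path, attaching at the deepest node whose value is $\prec$-below the new element, and taking the subtree below (rooted at the next path position) as the left child. Under this view, $j^{*} := \mathcal{PP}_P(q+1)$ is exactly that attachment position, and $c^{*} := \mathcal{PC}_P(q+1)$ is the next rightmost-path position after $j^{*}$: the root of $\mathcal{T}$ when $j^{*} = q+1$; position $q+1$ itself when $j^{*} = q$ (no left child is needed); and, when $j^{*} < q$, the position attaining the minimum of $P[j^{*}+1..q]$, which by the subtree-range property is the right child of $j^{*}$ in $\mathcal{T}$, and which equals the position attaining the minimum of $S[j^{*}+1..q]$ since $P$ and $S$ share $\mathcal{T}$.

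For the forward direction, I assume $S[j^{*}] \preceq S[q+1] \preceq S[c^{*}]$ and aim to show $\mathcal{PP}_S(q+1) = j^{*}$. If $j^{*} = q+1$, the right inequality gives $S[q+1] \prec S[c^{*}]$, and since $c^{*}$ is the root of $\mathcal{T}$, $S[c^{*}]$ is the $\prec$-minimum of $S[1..q]$, so no $j \leq q$ has $S[j] \prec S[q+1]$. If $j^{*} = q$, the left inequality gives $S[q] \prec S[q+1]$ and $(j^{*}, q]$ is empty. Otherwise ($j^{*} < q$), the left inequality makes $j^{*}$ a candidate, and for every $j' \in (j^{*}, q]$ I argue $S[j'] \succ S[q+1]$ by a split: if $j'$ lies on the rightmost path then $j' \geq c^{*}$, so strict $\prec$-monotonicity along the path together with $S[q+1] \preceq S[c^{*}]$ gives $S[j'] \succeq S[c^{*}] \succ S[q+1]$; if $j'$ is off the path, some $k \in (j', q]$ on the path satisfies $S[k] \prec S[j']$, and the previous subcase applied to $k$ yields $S[j'] \succ S[k] \succ S[q+1]$. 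Combined with the hypothesis on $[1..q]$, this establishes $\mathcal{PP}_S = \mathcal{PP}_P$ on $[1..q+1]$.

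For the converse, I assume $\mathcal{PP}_S(q+1) = j^{*}$. When $j^{*} = q+1$, the left inequality is tautological, and for the right one $\mathcal{PP}_S(q+1) = q+1$ forces $S[q+1]$ to be $\prec$-below every $S[j]$ with $j \leq q$, in particular below $S[c^{*}]$. When $j^{*} = q$, we have $c^{*} = q+1$, so the right inequality is tautological and $\mathcal{PP}_S(q+1) = q$ directly yields $S[q] \prec S[q+1]$. When $j^{*} < q$, the definition of $\mathcal{PP}_S(q+1)$ gives $S[j^{*}] \prec S[q+1]$, and the maximality clause forbids any $j' \in (j^{*}, q]$ with $S[j'] \prec S[q+1]$; applied to $c^{*} \in (j^{*}, q]$ and combined with trichotomy of the total order $\prec$, this yields $S[q+1] \prec S[c^{*}]$.

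The principal obstacle is the bookkeeping across the three branches of the $\mathcal{PC}$ definition and the corresponding edge cases of $\mathcal{PP}$: ensuring that the identification of $c^{*}$ with the child-on-the-rightmost-path is correct in every branch, that the subtree-range property correctly pinpoints $c^{*}$ as the minimum of the relevant range in both $P$ and $S$, and that the ``$=i$'' conventions collapse the intended inequality into a tautology without silently breaking either direction of the biconditional.
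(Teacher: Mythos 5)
Your proof is correct and follows essentially the same route as the paper's: the same three-way case split on $\mathcal{PP}_P(q+1)$, hinging on the same key observation that equal prefix-parent representations on $[1..q]$ force the position of the minimum of the relevant range ($[1..q]$ or $(\mathcal{PP}_P(q+1),q]$) to coincide in $P$ and $S$, so that $\mathcal{PC}_P(q+1)$ also locates the minimum in $S$. You merely spell out in rightmost-path language what the paper states directly (and you expand the converse, which the paper dismisses as immediate from the definitions); the only nit is that your setup presupposes $q\ge 1$, so the trivial $q=0$ case should be noted separately as the paper does.
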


\begin{proof}

$(\Longrightarrow)$
If $q=0$, $P[1]$ and $S[1]$ always have the same prefix-parent $1$.
Now let's assume $q\geq 1$. There are three cases, in each of which we show that $\mathcal{PP}_P(q+1)=\mathcal{PP}_S(q+1)$.
\begin{enumerate}
\item Case $\mathcal{PP}_P(q+1)=q+1$: Since $P[\mathcal{PC}_P(q+1)]$ is the minimum element in $P[1..q]$ and $\mathcal{PP}_P(i) = \mathcal{PP}_S(i)$ for $1\le i\le q$, $S[\mathcal{PC}_P(q+1)]$ is also the minimum element in $S[1..q]$. Therefore, if $S[q+1]\preceq S[\mathcal{PC}_P(q+1)]$ holds, then we have $\mathcal{PP}_S(q+1)=q+1$.
\item Case $\mathcal{PP}_P(q+1)=q$: Since $S[q]\preceq S[q+1]$, we have $\mathcal{PP}_S(q+1)=q$.
\item Case $\mathcal{PP}_P(q+1)<q$: Since $P[\mathcal{PC}_P(q+1)]$ is the minimum element in $P[\mathcal{PP}_P(q+1)+1..q]$ and $\mathcal{PP}_P(i) = \mathcal{PP}_S(i)$ for $1\le i\le q$, $S[\mathcal{PC}_P(q+1)]$ is also the minimum element in $S[\mathcal{PP}_P(q+1)+1..q]$. Therefore, if $S[\mathcal{PP}_P(q+1)]\preceq S[q+1]\preceq S[\mathcal{PC}_P(q+1)]$ holds, then $\mathcal{PP}_S(q+1)=\mathcal{PP}_P(q+1)$.
\end{enumerate}

$(\Longleftarrow)$
It is trivial by definitions of $\mathcal{PP}$ and $\mathcal{PC}$. \qed
\end {proof}

With the prefix-parent representation and the prefix-child representation of pattern $P$, we can simplify the text search. For each element $T[i]$, we can check $\mathcal{PP}_P(q+1) = \mathcal{PP}_{T[i-q..i]}(q+1)$ by comparing $T[i]$ with the elements in $T[i-q..i]$ whose indices correspond to $\mathcal{PP}_P(q+1)$ and $\mathcal{PC}_P(q+1)$ in $P$. Using this idea, we don't have to compute $\mathcal{PP}_{T[i-q..i]}(q+1)$. Algorithm~\ref{alg:textsearch} describes the algorithm to do this. We compute the failure function $\pi$ in the same way as \cite{CTM} does.

Given a string $P[1..m]$, we can compute the prefix-child representation and the prefix-parent representation simultaneously in linear time using a stack. $\mathcal{PP}_P(i)=j$ means that $P[j] \prec P[k]$ for $j<k<i$. The same is true for $\mathcal{PC}_P(i)$. On the stack, therefore, we maintain only $j$'s which satisfy $P[j] \prec P[k]$ for $j<k<i$ while scanning from $i=1$ to $m$. 
Suppose that $j_1,j_2,\dots,j_r$ are on the stack when we are computing $\mathcal{PP}_P(i)$ and $\mathcal{PC}_P(i)$. (We assume that $j_{r+1}=i$.)
Then, $(P[j_1],P[j_2],\dots,P[j_r])$ forms an increasing subsequence of $P$. When we consider a new index $i$, we pop the indices $j_r, j_{r-1},\dots,j_{t+1}$ repeatedly until we have $P[j_{t}]\prec P[i]$. If there exists such an index $j_{t}$, we set $\mathcal{PP}_P(i)=j_{t}$ and $\mathcal{PC}_P(i)=j_{t+1}$.
(If $t=r$, then $\mathcal{PC}_P(i)=j_{t+1}=i$.)
Otherwise, $P[i]$ is the minimum element in $P[1..i]$, and thus $\mathcal{PP}_P(i)=i$ and $\mathcal{PC}_P(i)=j_1$. 
Finally, we push $i$ onto the stack.
Algorithm~\ref{alg:computepdcdref} describes the algorithm to compute $\mathcal{PP}_P$ and $\mathcal{PC}_P$ simultaneously.

\begin{algorithm}[t]
\caption{Computing prefix-parent and prefix-child representations}\label{alg:computepdcdref}
\begin{algorithmic}[1]
\Procedure{PREFIX-PARENT-CHILD-REP}{$P[1..m]$}
    \State $ST \gets \text{an empty stack}$ 
    \For{$i \gets 1$ \textbf{to} $m$}
        \State $j_{next}\gets i$
        \While{$ST$ is not empty}
            \State $j \gets ST.top$
            \If{$P[j] \prec P[i]$}
                \State \textbf{break}
            \EndIf
            \State $ST.pop$
            \State $j_{next}\gets j$
        \EndWhile
        \State $\mathcal{PC}_P(i)\gets j_{next}$
        \If{$ST$ is empty}
            \State $\mathcal{PP}_P(i) \gets i$
        \Else
            \State $\mathcal{PP}_P(i) \gets j$
        \EndIf

        \State $ST.push(i)$
    \EndFor
    \State \textbf{return} $(\mathcal{PP}_P, \mathcal{PC}_P)$
\EndProcedure
\end{algorithmic}
\end{algorithm}


\section{Fast Cartesian tree matching with filtration}

In this section we present a practical solution based on filtration. Our solution for Cartesian tree matching consists of two phases: filtration and verification. First, the text is filtered with some exact string matching algorithm using a binary representation. In the second phase, the potential candidates are verified using a global-parent representation.

\subsection{Filtration} \label{sub:filtration}
In the filtration phase, a string $S$ is translated into a \emph{binary representation} $\beta_S$ as follows.
\begin{definition}
\emph{(binary representation) Given a string $S[1..n]$, the binary representation of $S$ is a binary string $\beta_S$ of length $n-1$, which is defined as follows:} 
\begin{equation*}
    \beta_S[i] = \begin{cases} 0 &\text{\emph{if }} \mathcal{PP}_S(i+1) = i \\
    1 &\text{\emph{otherwise,}} \end{cases}
\end{equation*}
\emph{for each }$1 \le i \le n-1$.
\end{definition}
One can easily check whether $\mathcal{PP}_S(i+1)=i$ is true or not by comparing $S[i]$ and $S[i+1]$: $\mathcal{PP}_S(i+1)=i$ if and only if $S[i]\prec S[i+1]$.
The following theorem proves that the binary representation can be used to filter a text $T$ to search for all Cartesian tree matching occurrences of a pattern $P$.

\begin{theorem}\label{binary}
\emph{Let $P$ and $T$ be two strings of lengths $m$ and $n$, respectively, and let $\beta_P$ and $\beta_T$ be the binary representations associated with $P$ and $T$, respectively. If $CT(P[1..m]) = CT(T[i..i+m-1])$, then $\beta_P[j] = \beta_T[i+j-1]$ for $1 \le j \le m-1$.} 
\end{theorem}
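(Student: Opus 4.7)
The plan is to reduce the binary representation to a purely local order comparison, so that Cartesian tree equivalence of the two strings immediately forces equality of the corresponding bits.

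First, I would observe the local characterization $\beta_S[k]=0 \iff S[k] \prec S[k+1]$. The forward direction is by definition of $\beta_S$ together with the definition of $\mathcal{PP}_S(k+1)=k$, which requires $S[k] \prec S[k+1]$. The reverse direction uses the fact that if $S[k] \prec S[k+1]$, then $k$ itself is a valid choice in the set $\{j : S[j] \prec S[k+1],\ 1\le j \le k\}$, and since it is the largest possible index in that set, $\mathcal{PP}_S(k+1)=k$, giving $\beta_S[k]=0$.

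Next I would invoke the one-to-one correspondence between Cartesian trees and parent-distance (equivalently prefix-parent) representations noted earlier in Section~2.2 and after Definition~3. The hypothesis $CT(P[1..m]) = CT(T[i..i+m-1])$ therefore gives $\mathcal{PP}_P(k) = \mathcal{PP}_{T[i..i+m-1]}(k)$ for every $1 \le k \le m$. Reading this identity off at index $k=j+1$ and then translating back through the local characterization applied separately to $P$ and to the substring $T[i..i+m-1]$, the condition $\beta_P[j]=0$ becomes $P[j]\prec P[j+1]$, and the condition $\beta_T[i+j-1]=0$ becomes $T[i+j-1]\prec T[i+j]$, which is exactly the comparison of the $j$th and $(j+1)$th entries of $T[i..i+m-1]$.

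The only mildly delicate point, and really the only place bookkeeping matters, is aligning indices: $\beta_T[i+j-1]$ is a statement about consecutive positions of $T$, but I need it to coincide with the corresponding entry of the binary representation of the substring $T[i..i+m-1]$. Since the binary representation depends only on the order of adjacent pairs, the bit of $\beta_T$ at position $i+j-1$ agrees with the bit of $\beta_{T[i..i+m-1]}$ at position $j$. With that observation, equality of $\mathcal{PP}$ at index $j+1$ propagates directly to equality of the two bits, and the theorem follows for every $1 \le j \le m-1$.
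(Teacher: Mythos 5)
Your proposal is correct and follows essentially the same route as the paper: it combines the one-to-one correspondence between Cartesian trees and prefix-parent representations with the local characterization $\beta_S[k]=0 \iff S[k]\prec S[k+1]$ (which the paper states just before the theorem), and your extra care about aligning $\beta_T[i+j-1]$ with the $j$th bit of the window's binary representation only makes explicit a step the paper leaves implicit.
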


\begin{proof}
The prefix-parent representation has a one-to-one mapping to the Cartesian tree. Therefore, if $CT(P[1..m]) = CT(T[i..i+m-1])$, then $\mathcal{PP}_P(j+1)=\mathcal{PP}_T(i+j)$  for $0 \le j \le m-1$. If $\mathcal{PP}_P(j+1)=\mathcal{PP}_T(i+j)$, then $\beta_P[j] = \beta_T[i+j-1]$ for $1 \le j \le m-1$.
\end{proof}
Theorem~\ref{binary} guarantees that any standard exact string matching algorithm can be used as a filtration procedure. As the exact string matching algorithm returns matches of $\beta_P$ in $\beta_T$, these matches are only possible candidates of Cartesian tree matching which should be verified.

Cantone et al.~\cite{OrderOPM} presented two filtration methods other than the binary representation to solve order-preserving matching. They used the property that $T$ doesn't match $P$ at position $i$ if there are two positions $j$ and $k$ such that $P[j]\preceq P[k]\Leftrightarrow T[i+j-1]\preceq T[i+k-1]$ doesn't hold. Thus any comparison result between two positions can be used for filtration. In Cartesian tree matching, however, even if there exist such $j$ and $k$, the corresponding Cartesian trees can be the same when $|j-k|> 1$. Therefore, we cannot use these filtration methods for Cartesian tree matching.

\subsection{Verification}\label{sub:verification}
In the verification phase, we have to check whether the candidates found by the filtration phase are actual matches or not.
This checking can be done using prefix-parent and prefix-child representations by Theorem~\ref{thm:pppc}, which takes 2 comparisons per element. In order to reduce the number of comparisons to 1, we introduce another representation as follows.

\begin{definition}
\emph{(Global-parent representation) Given a string $S[1..n]$, the global-parent representation of $S$ is a function $\mathcal{GP}_S$, which is defined as follows:}

\begin{equation*}
    \mathcal{GP}_S(i) = \begin{cases} j & \text{\emph{such that }} \mathcal{PC}_S(j)=i\text{\emph{ for }}j>i\\
    \mathcal{PP}_S(i) & \text{\emph{if such $j$ doesn't exist.}}
    \end{cases}
\end{equation*}
\end {definition}
$\mathcal{GP}_S(i)$ is well-defined because there is at most one $j>i$ which satisfies $\mathcal{PC}_S(j)=i$.
Fig.~\ref{fig:pppc} shows the global-parent representation by arrows. The arrow starting from $S[i]$ indicates the global parent of $S[i]$. If $\mathcal{GP}_S(i)=i$, we omit the arrow.

\begin {theorem}
\label{thm:gp}
\emph{Two strings $P[1..m]$ and $S[1..m]$ have the same Cartesian trees if and only if $S[\mathcal{GP}_P(i)] \preceq S[i]$ for all $1\le i\le m$.}
\end {theorem}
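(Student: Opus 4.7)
Proof plan.

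My plan is to interpret $\mathcal{GP}_P(i)$ as the position of the actual parent of node $i$ in $CT(P)$, with the convention $\mathcal{GP}_P(i) = i$ when $i$ is the root. Once this interpretation is in hand, the theorem becomes the standard uniqueness characterization of Cartesian trees: $CT(S)$ is the unique binary tree on positions $1, \ldots, m$ whose in-order traversal visits $1, 2, \ldots, m$ in order and which satisfies the min-heap property (parent value $\prec$ child value) under $\prec$.

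First I would verify the interpretation of $\mathcal{GP}_P$ via the right-spine analysis of the incremental Cartesian-tree construction, which is already implicit in Algorithm~\ref{alg:computepdcdref}. At step $i$ we pop from the right spine those elements that are $\succ P[i]$; the last-popped element becomes the left child of $P[i]$, and $P[i]$ is attached as the right child of the remaining top of the spine (or becomes the new root). Whenever a later index $j > i$ satisfies $\mathcal{PC}_P(j) = i$, the element $P[i]$ is precisely captured as the left child of $P[j]$ and thereafter leaves the right spine permanently, so its final parent is $P[j]$. Otherwise $P[i]$ stays on the right spine until the end, so its final parent is the one assigned at insertion time, namely $P[\mathcal{PP}_P(i)]$ (or $P[i]$ is the root, i.e.~$\mathcal{PP}_P(i) = i$). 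These two cases match the definition of $\mathcal{GP}_P$ exactly.

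Given this interpretation, the forward direction is immediate: if $CT(P) = CT(S)$, then $\mathcal{GP}_P(i)$ is also the parent position of node $i$ in $CT(S)$, so the min-heap property of $CT(S)$ under $\prec$ yields $S[\mathcal{GP}_P(i)] \preceq S[i]$ (strict $\prec$ for non-root $i$; equality only in the root case $\mathcal{GP}_P(i) = i$). For the converse, I would form the tree $T$ with the same shape as $CT(P)$ but with $S[i]$ placed at position $i$. Its in-order traversal visits positions in order, so it reads off $S[1], \ldots, S[m]$, and the hypothesis $S[\mathcal{GP}_P(i)] \preceq S[i]$ is precisely the min-heap property on $T$ (strict for non-root $i$, since distinct positions never give equality under $\prec$). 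By the uniqueness of the Cartesian tree determined by these two properties, $T = CT(S)$, and hence $CT(P) = CT(S)$.

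The main obstacle is the right-spine identification of $\mathcal{GP}_P(i)$ with the true parent in $CT(P)$; once that is established, the rest of the proof is a direct invocation of the in-order/min-heap characterization of Cartesian trees. As a sanity check, one could instead prove the converse by a step-by-step induction on $q$ using Theorem~\ref{thm:pppc}, but the tree-based argument above is cleaner and makes the role of $\mathcal{GP}_P$ transparent.
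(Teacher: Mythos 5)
Your proposal is correct, but it takes a genuinely different route from the paper. You first identify $\mathcal{GP}_P(i)$ as the parent of position $i$ in the \emph{full} tree $CT(P)$ (with $\mathcal{GP}_P(i)=i$ at the root) via the right-spine construction, and then reduce the theorem to the standard uniqueness characterization of the Cartesian tree as the only binary tree whose in-order traversal is $1,\dots,m$ and which is a min-heap under $\prec$; both directions then follow in one line. The paper never makes this semantic identification explicit. Instead it stays inside its representation framework: it proves the equivalence of the one-sided condition $S[\mathcal{GP}_P(i)]\preceq S[i]$ with the two-sided condition $S[\mathcal{PP}_P(i)]\preceq S[i]\preceq S[\mathcal{PC}_P(i)]$ (which characterizes Cartesian-tree equality via Theorem~\ref{thm:pppc}), and the nontrivial half recovers $S[\mathcal{PP}_P(k)]\preceq S[k]$ by chasing the chain $k=k_0<k_1<\dots<k_r$ with $\mathcal{PC}_P(k_{l+1})=k_l$ and showing the prefix-parents agree along it --- that chain is exactly the combinatorial shadow of your ``node $k$ is repeatedly captured as a left child'' picture. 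Your approach buys conceptual clarity (it explains why the representation is called the \emph{global} parent, and makes the forward direction literally the heap property) at the cost of two standard but unproved-in-the-paper ingredients: the right-spine identification of $\mathcal{GP}_P$ with the true parent, which you correctly flag as the main obligation and which is routine given the stack invariant the paper already states for Algorithm~\ref{alg:computepdcdref}, and the in-order/min-heap uniqueness lemma. The paper's argument is longer but self-contained relative to its own Theorem~\ref{thm:pppc} and never needs to reason about tree shapes at all. One small point worth making explicit in your write-up: for a non-root $i$ the hypothesis gives $S[\mathcal{GP}_P(i)]\preceq S[i]$ with $\mathcal{GP}_P(i)\neq i$, which under $\prec$ is automatically strict, so the labelled tree you build really is a strict min-heap and the uniqueness lemma applies.
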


\begin{proof}
We will prove that $S[\mathcal{PP}_P(i)] \preceq S[i] \preceq S[\mathcal{PC}_P(i)]$ for all $1\le i\le m$ if and only if $S[\mathcal{GP}_P(i)] \preceq S[i]$ for all $1\le i\le m$.

$(\Longrightarrow)$
It is trivial by definition of $\mathcal{GP}$.

$(\Longleftarrow)$ Assume $S[\mathcal{GP}_P(i)] \preceq S[i]$ for all $1\le i\le m$. For any $1\le k\le m$, we first show $S[k] \preceq S[\mathcal{PC}_P(k)]$, and then we show $S[\mathcal{PP}_P(k)] \preceq S[k]$.
\begin{enumerate}
\item (Proof of $S[k]\preceq S[\mathcal{PC}_P(k)]$)
There are two cases: $\mathcal{PC}_P(k)=k$ and $\mathcal{PC}_P(k)\neq k$. If $\mathcal{PC}_P(k)=k$, then $S[k]\preceq S[\mathcal{PC}_P(k)]$ holds trivially. Otherwise, since $\mathcal{GP}_P({\mathcal{PC}_P(k)})=k$, $S[k]=S[\mathcal{GP}_P({\mathcal{PC}_P(k)})]\preceq S[\mathcal{PC}_P(k)]$. Therefore, $S[k] \preceq S[\mathcal{PC}_P(k)]$ holds.

\item (Proof of $S[\mathcal{PP}_P(k)] \preceq S[k]$)
If $\mathcal{GP}_P(k)=\mathcal{PP}_P(k)$, then $S[\mathcal{PP}_P(k)]=S[\mathcal{GP}_P(k)] \preceq S[k]$. So we only have to consider the case that there is $k_1>k$ which satisfies $\mathcal{PC}_P(k_1)=k$.
Let $k=k_0< k_1<\dots<k_r \le m$ be a sequence such that $\mathcal{PC}_P(k_{l+1})=k_l$, and there is no $k_{r+1}> k_r$ which satisfies $\mathcal{PC}_P(k_{r+1})=k_r$. Since $(k_0,k_1,\dots,k_r)$ is a strictly increasing sequence, such $k_r$ always exists. Note that $\mathcal{GP}_P(k_l)=k_{l+1}$ except for $\mathcal{GP}_P(k_r)$. On the sequence, there may or may not exist $j$ such that $\mathcal{PP}_P(k_j)=k_j$.

Suppose that there exists some $j$ such that $\mathcal{PP}_P(k_j)=k_j$. Since $k_{j-1}=\mathcal{PC}_P(k_j)$, $P[k_{j-1}]$ is the minimum element in $P[1..k_j-1]$, and so $\mathcal{PP}_P(k_{j-1})=k_{j-1}$. Proceeding inductively, $\mathcal{PP}_P(k_l)=k_l$ for all $l\le j$. Thus $S[\mathcal{PP}_P(k)] \preceq S[k]$ holds trivially.

Now we consider the case that $\mathcal{PP}_P(k_j)\neq k_j$ for all $j$. Then, we have $S[k_0]\succeq S[k_1]\succeq \dots \succeq S[k_r]\succeq S[\mathcal{GP}_P(k_r)]=S[\mathcal{PP}_P(k_r)]$ by the assumption that $S[\mathcal{GP}_P(i)]\preceq S[i]$ for all $i$. 
We now show $\mathcal{PP}_P(k_r)=\mathcal{PP}_P(k)$ as follows.
Since $\mathcal{PC}_P(k_r)=k_{r-1}$, $P[k_{r-1}]$ is the minimum element in $P[\mathcal{PP}_P(k_r)+1..k_r-1]$, and $P[k_{r-1}]\succeq P[\mathcal{PP}_P(k_r)]$. Hence, we have $\mathcal{PP}_P({k_{r-1}})=\mathcal{PP}_P(k_r)$. Inductively, we can show that $\mathcal{PP}_P({k_{0}})=\mathcal{PP}_P(k_{1})=\dots =\mathcal{PP}_P(k_r)$.
Therefore, $S[\mathcal{PP}_P(k)] \preceq S[k]$ holds. $\qed$
\end{enumerate} 
\end{proof}

By Theorem~\ref{thm:gp}, we only have to compare once for each element in the verification phase.
For a potential candidate obtained from the filtration phase (say, it starts from $T[i]$), we compare $T[i+q-1]$ and $T[i+\mathcal{GP}_P(q)-1]$ from $q=1$ to $m$. The candidate is discarded when there exists $q$ such that $T[i+q-1]\prec T[i+\mathcal{GP}_P(q)-1]$.

We compute the global-parent representation using a stack, as in computing the prefix-parent and the prefix-child representations. The only difference is that first we set $\mathcal{GP}_P(i)$ as $\mathcal{PP}_P(i)$, and then if we find $j$ such that $\mathcal{PC}_P(j)=i$ we update $\mathcal{GP}_P(i)$ to $j$. 
%
%

\subsection{Sublinear time on average}
The proof of sublinearity is similar to the analysis of order-preserving matching with filtration \cite{FilterOPM}. Let's assume that the elements in the pattern $P$ and the text $T$ are independent of each other and the distribution is uniform. The verification phase takes time proportional to the pattern length times the number of potential candidates. When alphabet size is $|\Sigma|$, the probability that $\beta_P[i]=0$ (i.e., probability that $P[i]\prec P[i+1]$) is $(|\Sigma|^2+|\Sigma|)/(2|\Sigma|^2)$, since there are $|\Sigma|^2$ pairs and $|\Sigma|$ pairs among them have equal elements. Similarly, the probability that $\beta_P[i]=1$ is $(|\Sigma|^2-|\Sigma|)/(2|\Sigma|^2)$, and it is the same for $\beta_T[i]$. Therefore, the probability that $\beta_P[i]=\beta_T[i]$ is $((|\Sigma|^2+|\Sigma|)/(2|\Sigma|^2))^2+((|\Sigma|^2-|\Sigma|)/(2|\Sigma|^2))^2=1/2+1/(2|\Sigma|^2)$. As the pattern length increases, the number of potential candidates decreases exponentially, and the verification time approaches zero. Hence, the filtration time dominates. So if the filtration method takes a sublinear time in the average case, the total algorithm takes a sublinear time in the average case, too.
  
\subsection{SIMD instructions}
 
When we use the Boyer-Moore-Horspool algorithm \cite{HORSPOOL} and the Alpha skip search algorithm \cite{SkipSearch} as the filtration method, we pack four 32-bit numbers or sixteen 8-bit numbers into a register, as in order-preserving matching algorithms \cite{SIMDOPM,OrderOPM}. Each pair of two corresponding packed data elements can be compared in parallel using streaming SIMD extensions (SSE) \cite{SSE}.  In the case of 32-bit integers, for example, we compute $(T[i+3]>T[i+4])$, $(T[i+2]>T[i+3])$, $(T[i+1]>T[i+2])$, and $(T[i]>T[i+1])$ in parallel as in Algorithm \ref{alg:simdInt}, where
instruction \_mm\_loadu\_si128((\_\_m128i *)($T+i$)) loads four 32-bit integers from memory $T+i$ into a 128-bit register, instruction \_mm\_cmpgt\_epi32($a$, $b$)  compares four pairs of packed 32-bit integers and returns the results of the comparisons into a 128-bit register, instruction \_mm\_castsi128\_ps casts the integer type to the float type, and instruction \_mm\_movemask\_ps selects only the most significant bits of the 4 floats. 
Comparing a pair of sixteen 8-bit numbers can be done similarly.

\setlength{\textfloatsep}{12pt}
\begin{algorithm}[t]
\caption{Compare integers in parallel}\label{alg:simdInt}
\begin{algorithmic}[1]
\Procedure{CompareUsingSIMD}{$T[1..n], i$}
    \State \_\_m128i $a$ $\gets$ \_mm\_loadu\_si128((\_\_m128i *)($T+i$))
    \State \_\_m128i $b$ $\gets$ \_mm\_loadu\_si128((\_\_m128i *)($T+i+1$))
    \State \_\_m128i $r$ $\gets$ \_mm\_cmpgt\_epi32($a$, $b$)
    \State \textbf{return} \_mm\_movemask\_ps(\_mm\_castsi128\_ps($r$)) 
\EndProcedure
\end{algorithmic}
\end{algorithm}

\section{SIMD solution for short patterns} \label{pmct}
In this section we present an algorithm that works when the alphabet consists of 1-byte characters and the pattern length $m$ is at most 16. As shown in Section~\ref{sub:verification}, we test $T[s+i-1]\succeq T[s+\mathcal{GP}_P(i)-1]$ for $1\le i\le m$ to check for an occurrence at position $s$ of the text $T$.

Let $W$ be a word of 16 bytes containing the current window of the text, i.e., $W=T[s..s+15]$. For $1\le i\le m$, we define $W_i$ (word obtained from $W$ by shifting $i-\mathcal{GP}_P(i)$ positions to the left or to the right, depending on the sign of $i-\mathcal{GP}_P(i)$) as follows:
\begin{equation*}
W_i=\begin{cases}W\ll (\mathcal{GP}_P(i)-i) &\text{ if } i<\mathcal{GP}_P(i)\\
W\gg (i-\mathcal{GP}_P(i)) &\text{ if } i>\mathcal{GP}_P(i).
\end{cases}
\end{equation*}
For fixed $i$, we can find the positions $j$ which satisfy $W[j+i-1]\succeq W[j+\mathcal{GP}_P(i)-1]$ for $0\le j\le 15$ in parallel by comparing $W_i$ to $W$ using SIMD instructions. The satisfying positions for all $1\le i\le m$ are the occurrences of the pattern. 
The details of the algorithm are as follows. 
We test whether $W_i[j]\preceq W[j]$ for $0\le j\le15$ in parallel using the SIMD instruction $R_i=\text{\_mm\_cmpgt\_epi8}(W,W_i)$ for $i<\mathcal{GP}_P(i)$ or $R_i={\sim}\text{\_mm\_cmpgt\_epi8}(W_i,W)$ for $i>\mathcal{GP}_P(i)$. 
(In order to get only significant bits when computing $R_i$, we use instruction \_mm\_movemask\_epi8.)
Then we compute $q=\text{AND}_{i=1}^{m}(R_i\ll (i-1))$.
Finally, we report a match at position $s+j$ of the text if $q[j]=1$.

\begin{example}
Let's consider an example of the pattern $P = (3, 1, 6, 4, 8)$ and the window of the text $W = (10, 12, 16, 15, 6, 14, 9, 12, 11, 14, 9, 17, 12, 13, 12, 10)$. We observe that since $1-\mathcal{GP}_P(1)=3-\mathcal{GP}_P(3)$, $R_1=R_3$. Moreover we do not need to compute $R_2$, since $2-\mathcal{GP}_P(2)=0$. Hence we compute $R_1$, $R_4$, and $R_5$.
\vspace{2mm}

\setlength{\tabcolsep}{2.8pt}
\begin{tabular}{l c r r r r r r r r r r r r r r r r}
$W$&=&10,&12,&16,&15,&06,&14,&09,&12,&11,&14,&09,&17,&12,&13,&12,&10\\
$W_1$&=&12,&16,&15,&06,&14,&09,&12,&11,&14,&09,&17,&12,&13,&12,&10\;&\\
$R_1$&=&0,&0,&1,&1,&0,&1,&0,&1,&0,&1,&0,&1,&0,&1,&1,&-\\
\end{tabular}

\setlength{\tabcolsep}{2.8pt}
\begin{tabular}{l c r r r r r r r r r r r r r r r r}
$W$&=&10,&12,&16,&15,&06,&14,&09,&12,&11,&14,&09,&17,&12,&13,&12,&10\\
$W_4$&=& &  &10,&12,&16,&15,&06,&14,&09,&12,&11,&14,&09,&17,&12,&13\\
$R_4$&=&-,&-,&1,&1,&0,&0,&1,&0,&1,&1,&0,&1,&1,&0,&1,&1\\
\end{tabular}

\setlength{\tabcolsep}{2.8pt}
\begin{tabular}{l c r r r r r r r r r r r r r r r r}
$W$&=&10,&12,&16,&15,&06,&14,&09,&12,&11,&14,&09,&17,&12,&13,&12,&10\\
$W_5$&=& &10,&12,&16,&15,&06,&14,&09,&12,&11,&14,&09,&17,&12,&13,&12\\
$R_5$&=&-,&1,&1,&0,&0,&1,&0,&1,&0,&1,&0,&1,&0,&1,&0,&0\\
\end{tabular}
\vspace{2mm}

The final result $q$ can be computed as follows:
\vspace{2mm}

\setlength{\tabcolsep}{4.5pt}
\begin{tabular}{l c r r r r r r r r r r r r r r r r}
$R_1$     &=&0,&0,&1,&1,&0,&1,&0,&1,&0,&1,&0,&1,&0,&1,&1,&-\\
$R_3\ll 2$&=&1,&1,&0,&1,&0,&1,&0,&1,&0,&1,&0,&1,&1,&-,&0,&0\\
$R_4\ll 3$&=&1,&0,&0,&1,&0,&1,&1,&0,&1,&1,&0,&1,&1,&0,&0,&0\\
$R_5\ll 4$&=&0,&1,&0,&1,&0,&1,&0,&1,&0,&1,&0,&0,&0,&0,&0,&0\\
\hline
$q$         &=&0,&0,&0,&1,&0,&1,&0,&0,&0,&1,&0,&0,&0,&0,&0, &0\\
\end{tabular} 
\end{example}

Therefore, we can report 3 matches. After we have tested a window of the text, we shift the current window to the right by $17-m$ positions. This algorithm takes $O(mn/(17-m))$ SIMD instructions. 

\section{Experiments}

\begin{table}[t]
\begin{center}
\renewcommand{\arraystretch}{1}
\begin{tabular}{|c | c | c | c | c c c | c c c c | c c c c | c|}
\hline
\multirow{2}{*}{Dataset}& \multirow{2}{*}{$m$} & KMP & IKMP & \multicolumn{3}{c|}{SBNDMCT} & \multicolumn{4}{c|}{BMHCT} & \multicolumn{4}{c|}{SKSCT} & PM \\
& & CT & CT & 2 & 4 & 6 & 4 & 8 & 12 & 16 & 4 & 8 & 12 & 16 & CT \\
\hline
Random & 5 & 10.52 & 6.84 & 4.99 & 4.42 & & 4.17 & & & & \textbf{3.31} & & & & \\
int & 9 & 10.71 & 6.83 & 2.71 & 2.31 & 1.95 & 1.95 & \textbf{1.64} & & & 1.91 & 2.26 & & & \\
& 17 & 10.69 & 6.83 & 1.39 & 1.34 & 0.95 & 1.31 & 0.80 & 0.86 & 1.60 & 1.13 & \textbf{0.45} & 0.61 & 3.91 & \\
& 33 & 10.69 & 6.83 & 0.72 & 0.70 & 0.65 & 1.07 & 0.51 & 0.51 & 1.01 & 0.76 & 0.32 & \textbf{0.30} & 0.48 & \\
& 65 & 10.71 & 6.83 & 0.72 & 0.71 & 0.66 & 0.98 & 0.44 & 0.43 & 0.71 & 0.61 & 0.27 & \textbf{0.24} & 0.28 & \\
\hline
Seoul & 5 & 5.08 & 3.07 & 2.67 & 2.91 & & 2.52 & & & & \textbf{2.27} & & & & \\
temp & 9 & 5.11 & 3.14 & 1.56 & 1.45 & 1.55 & 1.55 & \textbf{1.23} & & & 1.27 & 1.77 & & & \\
& 17 & 5.51 & 3.12 & 0.89 & 0.81 & 0.71 & 1.10 & 0.62 & 0.63 & 0.84 & 0.88 & \textbf{0.44} & 0.49 & 2.55 & \\
& 33 & 5.56 & 3.12 & 0.49 & 0.48 & 0.45 & 0.84 & 0.40 & 0.34 & 0.41 & 0.68 & 0.32 & \textbf{0.20} & 0.25 & \\
& 65 & 5.52 & 3.11 & 0.48 & 0.48 & 0.46 & 0.77 & 0.26 & 0.19 & 0.28 & 0.57 & 0.25 & 0.13 & \textbf{0.12} & \\
\hline
Random & 5 & 10.24 & 6.86 & 4.80 & 4.44 & & 3.95 & & & & 3.22 & & & & \textbf{0.50} \\
char & 7 & 10.32 & 6.86 & 3.53 & 2.89 & 4.47 & 2.39 & & & & 2.40 & & & & \textbf{0.84}\\
& 9 & 10.34 & 6.85 & 2.65 & 2.32 & 1.94 & 1.74 & \textbf{1.24} & & & 1.91 & 1.47 & & & 1.32\\
& 13 & 10.32 & 6.85 & 1.75 & 1.68 & 1.10 & 1.23 & 0.70 & 0.68 & & 1.34 & \textbf{0.45} & 1.15 & & 3.76\\
& 17 & 10.35 & 6.86 & 1.28 & 1.25 & 0.87 & 1.04 & 0.52 & 0.49 & 0.79 & 1.04 & \textbf{0.27} & 0.32 & 1.64 & \\
& 33 & 10.34 & 6.85 & 0.61 & 0.60 & 0.54 & 0.78 & 0.29 & 0.26 & 0.43 & 0.66 & 0.16 & \textbf{0.09} & 0.11 & \\
& 65 & 10.36 & 6.86 & 0.63 & 0.63 & 0.55 & 0.74 & 0.20 & 0.17 & 0.27 & 0.47 & 0.13 & \textbf{0.04} & 0.05 & \\
\hline
\end{tabular}
\end{center}
\setlength{\belowcaptionskip}{-2pt} 
\caption{Execution times in seconds for random patterns in texts (Random datasets: for 100 patterns, Seoul temperatures dataset: for 1000 patterns).}
\label{table:time}
\end{table}

\begin{figure}[t]
\centering
\includegraphics[height=5.0cm]{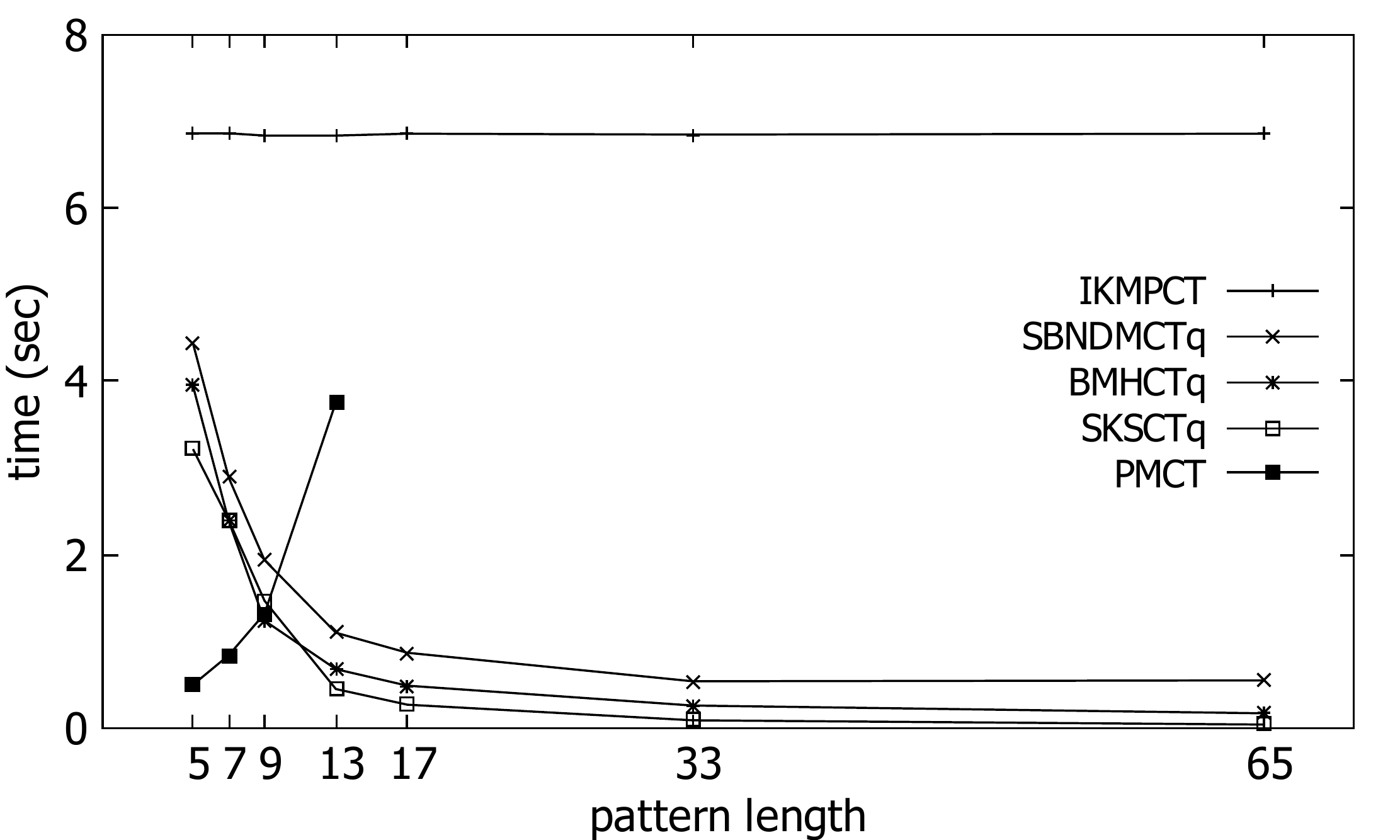}
\setlength{\belowcaptionskip}{3pt}
\caption{Execution times for the random character dataset.}
\label{fig:integer}
\end{figure}

In this section we conduct experiments comparing the following algorithms. 
\begin{itemize}
    \item KMPCT: algorithm of Park, Amir, Landau, and Park \cite{CTM}
    \item IKMPCT: our improved linear-time algorithm based on prefix-parent and prefix-child representations (Section~\ref{sec:linearalg})
    \item PMCT: SIMD solution for short patterns (Section~\ref{pmct})
    \item SBNDMCT$q$: algorithm based on the SBNDM$q$ filtration implemented by Faro and Lecroq \cite{SMART} on the binary representations of the text and the pattern (Section~\ref{sub:filtration}) and verification using the global-parent representation (Section~\ref{sub:verification}) \cite{SBNDM} (The following algorithms have the same framework as SBNDMCT$q$; only SBNDM$q$ is replaced by another filtration method.)
    \item BMHCT$q$: algorithm based on the $q$-gram Boyer-Moore-Horspool filtration using SIMD instructions \cite{HORSPOOL,QGRAM,SIMDOPM} 
    \item SKSCT$q$: algorithm based on the $q$-gram Alpha skip search filtration using SIMD instructions \cite{SkipSearch,OrderOPM}
\end{itemize}

We tested for two random datasets and one real dataset, which is a time series of Seoul temperatures. The first random dataset consists of 10,000,000 random integers. The second random dataset consists of 10,000,000 random characters. The Seoul temperatures dataset consists of 658,795 integers referring to the hourly temperatures in Seoul (multiplied by ten) in the years 1907-2019. In general, temperatures rise during the day and fall at night. Therefore, the Seoul temperatures dataset has more matches than random datasets. We picked 100 random patterns per pattern length from random datasets and 1000 random patterns per pattern length for the Seoul temperatures dataset. 

The experimental environments and parameters are as follows. All algorithms were implemented in C++11 and compiled with GNU C++ compiler version 4.8.5, and O3 and msse4 options were used. The experiments were performed on a CentOS Linux 7 with 128GB RAM and Intel Xeon CPU E5-2630 processor.

Table~\ref{table:time} shows the total execution times of Cartesian tree matching algorithms for random patterns (including the preprocessing). The best results are boldfaced. We choose the best results of the random character dataset from each algorithm regardless of $q$ and present them in Fig.~\ref{fig:integer} (except KMPCT because of readability). Our linear-time algorithm IKMPCT improves upon algorithm KMPCT of \cite{CTM} by about 35\%. In the random character dataset, PMCT is the fastest algorithm for short patterns. However, as the pattern length grows, algorithms based on the filtration method are much faster in practice. It can be seen that SKSCT is the fastest algorithm in most cases. When the pattern length is equal to 9, BMHCT utilizing 8-grams is the fastest algorithm, irrespective of the datasets. As pattern length grows, SKSCT utilizing 12-grams becomes the fastest algorithm.

Regardless of the data type, the results are almost consistent. In details, however, there are several differences. First, filtration algorithms, especially SKSCT algorithms, are slower at the Seoul temperatures dataset relatively. It's because there are more matches in the Seoul temperatures dataset. Second, when $q$ is large, BMHCT and SKSCT algorithms are faster in the random character dataset than in the random integer dataset. It's because the maximum number that we can compute in parallel is 16 in the character dataset while it is 4 in the integer dataset.

\vspace{\baselineskip}

\noindent\textbf{Acknowledgments.} Song, Ryu and Park were supported by Collaborative Genome Program for Fostering New Post-Genome industry  through the National Research Foundation of Korea(NRF) funded by the Ministry of  Science ICT and Future Planning (No. NRF-2014M3C9A3063541).

%
%

%
%
%
\bibliographystyle{splncs04}
\bibliography{mybibliography}

\end{document}